\newcommand{\ZZ}{\mathbb Z}
\newcommand{\NN}{\mathbb N}
\newcommand{\frS}{\mathfrak S}
\newcommand{\vv}{\mathbf}
\theoremstyle{plain}
\newtheorem{theorem}{Теорема}
\newtheorem{corollary}{Следствие}
\newtheorem{assertion}{Утверждение}
\theoremstyle{definition}
\begin{document}

\title{О синтезе обратимых схем с малым числом дополнительных входов из элементов NOT, CNOT и 2-CNOT\\%
    On Synthesis of Reversible Circuits with Small Number of Additional Inputs Consisting of NOT, CNOT and 2-CNOT Gates}

\author{Д.\,В.~Закаблуков\footnote{Ведущий разработчик программного обеспечения, Passware, Inc., %
    e-mail: \texttt{\href{mailto:dmitriy.zakablukov@gmail.com}{dmitriy.zakablukov@gmail.com}}}\\
    Dmitry V. Zakablukov\footnote{Lead software developer, Passware, Inc., %
    e-mail: \texttt{\href{mailto:dmitriy.zakablukov@gmail.com}{dmitriy.zakablukov@gmail.com}}}}

\maketitle

УДК 004.312, 519.7    
   
Исследование выполнено при финансовой поддержке РФФИ в рамках научного проекта № 16-01-00196~A.

\begin{abstract}
В работе рассматривается вопрос сложности обратимых схем, состоящих из функциональных элементов NOT, CNOT и 2-CNOT
и имеющих малое число дополнительных входов. Изучается функции Шеннонa сложности $L(n, q)$ обратимой схемы,
реализующей отображение $f\colon \ZZ_2^n \to \ZZ_2^n$, при условии, что количество дополнительных входов $q \leqslant O(n^2)$.
Доказывается оценка $L(n,q) \asymp n2^n \mathop / \log_2 n$ для указанного диапазона значений $q$.

The paper discusses the gate complexity of reversible circuits with the small number of additional inputs
consisting of NOT, CNOT and 2-CNOT gates. We study Shannon's gate complexity function $L(n, q)$
for a reversible circuit implementing a Boolean transformation $f\colon \ZZ_2^n \to \ZZ_2^n$
with $q \leqslant O(n^2)$ additional inputs.
The general bound $L(n,q) \asymp n2^n \mathop / \log_2 n$ is proved for this case.
\end{abstract}

\textbf{Ключевые слова}: обратимые схемы, сложность схемы, глубина схемы, вычисления с памятью.

\textbf{Keywords:} reversible logic, gate complexity, circuit depth, computations with memory.

\section*{Введение}
В качестве меры сложности булевой функции можно рассматривать сложность реализующей ее минимальной схемы
из функциональных элементов или же контактной схемы, как было предложено еще К.~Шенноном в работе 1949 года~\cite{shannon}.
Схемы из классических функциональных элементов изучаются в теории сложности управляющих систем
довольно продолжительное время. Основные оценки сложности таких схем были получены О.\,Б. Лупановым уже
к 1958 году~\cite{lupanov_complexity}.

Обратимые функциональные элементы, т.\,е. элементы, реализующие биективное отображение, изучались Р.~Фейнманом и Т.~Тоффоли
в 1980-х годах в работах~\cite{feynman} и~\cite{toffoli}.
Однако, несмотря на возросший интерес к таким схемам в последнее время в связи с развитием теории квантовых вычислений,
вопрос асимптотической сложности обратимых схем подробно не изучался вплоть до последнего времени.
Первые результаты о порядке роста сложности обратимых схем, состоящих из функциональных элементов NOT, CNOT и 2-CNOT,
были получены в работах~\cite{my_first_complexity_bounds} и~\cite{my_complexity_bounds_dm}.
Уже тогда стало ясно, что сложность обратимых схем существенно зависит от количества дополнительных входов в схеме.

Для схем из классических функциональных элементов вопрос о вычислениях с ограниченной памятью был рассмотрен Н.\,А. Карповой
в работе~\cite{karpova}. Полученные оценки показали, что в базисе классических функциональных элементов,
реализующих все \mbox{$p$-местные} булевы функции, асимптотическая оценка функции Шеннона сложности схемы
с тремя и более регистрами памяти зависит от значения $p$, но не изменяется при увеличении количества
используемых регистров памяти.

В модели Карповой каждому входу и выходу вершины графа, описывающего схему, приписывается некоторый символ
из множества $R = \{\,r_1, \ldots, r_n\,\}$. Каждый символ $r_i$ можно считать уникальным идентификатором ячейки памяти,
значение из которой поступает на вход функционального элемента или в которую записывается значение с его выхода.
В модели обратимых схем из функциональных элементов NOT, CNOT и 2-CNOT, описанной в работе~\cite{my_complexity_bounds_dm},
все элементы схемы соединяются друг с другом последовательно без ветвлений и имеют одинаковое количество входов и выходов;
схему можно представить в виде нескольких параллельных линий, по которым передаются булевы значения и которые
могут быть изменены элементами схемы. Данные модели схем с ограниченной памятью весьма похожи:
регистрам памяти классической схемы соответствуют линии обратимой схемы,
поскольку и те, и другие хранят результат вычислений на каждом шаге работы схемы.
Однако если модель Карповой позволяет перезаписывать значения в ячейках памяти
(символ $r_i$, приписанный выходу некоторого элемента, может совпадать с символом,
приписанным одному из входов этого же элемента), то модель обратимой схемы не позволяет
перезаписывать значения на линиях, а лишь инвертировать их в некоторых случаях: контролируемому выходу обратимого элемента
приписывается номер линии, значение на которой будет инвертировано,
если значение булевой функции от значений на контролирующих входах элемента будет равно 1,
причем номер такой линии не может совпадать ни с одним из номеров линий, приписанных контролирующим входам элемента.

В работе~\cite{my_general_bounds} изучались функции Шеннона сложности $L(n,q)$ и глубины $D(n,q)$ обратимой схемы,
состоящей из элементов NOT, CNOT и 2-CNOT и реализующей некоторое булево отображение $\ZZ_2^n \to \ZZ_2^n$
с использованием $q$ дополнительных входов. Было доказано, что для всех значений $q$ таких,
что $n^2 \lesssim q \lesssim n2^{n - o(n)}$, порядок роста сложности обратимой схемы удовлетворяет
соотношению $L(n,q) \asymp n2^n \mathop / \log_2 q$.

В данной работе рассматриваются схемы с малым числом дополнительных входов, состоящие из обратимых функциональных
элементов NOT, CNOT и 2-CNOT. Доказывается, что для всех значений $q \leqslant O(n^2)$ верно соотношение
$L(n,q) \asymp n2^n \mathop / \log_2 n$.
Также доказывается соотношение $L(n,q) \asymp n2^n \mathop / \log_2 (n + q)$ для всех значений $q$ таких,
что $q \lesssim n2^{n-o(n)}$.

\section{Основные понятия}

Mы будем пользоваться формальным определением обратимых элементов из работы~\cite{my_complexity_bounds_dm}.
Через $C_{i_1,\ldots,i_k;j}^n$ обозначается функциональный элемент $k$-CNOT с $n$ входами
(контролируемый инвертор, обобщенный элемент Тоффоли с $k$ контролирующими входами),
задающий преобразование $\ZZ_2^n \to \ZZ_2^n$ вида
$$
    f_{i_1,\ldots,i_k;j}(\langle x_1, \ldots, x_n \rangle) =
        \langle x_1, \ldots, x_j \oplus x_{i_1} \wedge \ldots \wedge x_{i_k}, \ldots, x_n \rangle  \; .
$$
Инвертор NOT не имеет контролирующих входов, элемент CNOT имеет ровно один контролирующий вход,
а элемент Тоффоли 2-CNOT~--- ровно два.
Множество всех элементов NOT, CNOT и 2-CNOT с $n$ входами будем обозначать через $\Omega_n^2$

Сложность обратимой схемы $\frS$ будем обозначать через $L(\frS)$. Функция Шеннона сложности $L(n,q)$ обратимой схемы,
состоящей из элементов множества $\Omega_n^2$ и реализующей некоторое булево отображение $\ZZ_2^n \to \ZZ_2^n$
с использованием $q$ дополнительных входов, была определена в работе~\cite{my_complexity_bounds_dm}.

Значимыми входами схемы будем называть все входы, не являющиеся дополнительными, а значимыми выходами~--- те выходы,
значения на которых нужны для дальнейших вычислений.

В работе~\cite{shende} было доказано, что при помощи обратимых схем, состоящих только из элементов множества $\Omega_n^2$,
можно реализовать любую заданную подстановку из симметрической группы $S(\ZZ_2^n)$ при $n < 4$ и любую заданную четную
подстановку из знакопеременной группы $A(\ZZ_2^n)$ при $n \geqslant 4$. Другими словами, обратимые схемы с четырьмя и более
входами могут реализовать только те отображения, которые задают четную подстановку.

В работе~\cite{my_complexity_bounds_dm} было введено множество $F(n,q)$ всех отображений $\ZZ_2^n \to \ZZ_2^n$,
которые могут быть реализованы обратимой схемой с $(n+q)$ входами. Очевидно, что $F(n,q)$ является подмножеством
множества $P_2(n,n)$ всех булевых отображений $\ZZ_2^n \to \ZZ_2^n$. В той же работе было показано,
что $F(n,q) = P_2(n,n)$ при $q \geqslant n$ и что $F(n,0)$ совпадает с множеством отображений,
задаваемых всеми подстановками из $S(\ZZ_2^n)$ и $A(\ZZ_2^n)$ при $n < 4$ и $n \geqslant 4$, соответственно.

\section{Отображения, реализуемые обратимыми схемами с малым числом дополнительных входов}

Рассмотрим некоторое булево отображение $f\colon \ZZ_2^n \to \ZZ_2^n$.
Пусть для некоторых двух различных входов $\vv x_1$ и $\vv x_2$ значение отображения $f$ от этих входов совпадает:
$f(\vv x_1) = f(\vv x_2)$.
Определим для выхода $\vv y$ множество $A_{\vv y}$ его прообразов:
$A_{\vv y}=\{\,\vv x \in \ZZ_2^n\mid f(\vv x) = \vv y \in \ZZ_2^n\,\}$.
Обозначим через $d$ максимальное количество прообразов среди всех выходов: $d = \max\limits_{\vv y}{|A_{\vv y}|}$.

\begin{assertion}\label{assertion_max_additional_outputs}
    Не существует обратимой схемы, состоящей из элементов множества $\Omega_n^2$, реализующей заданное отображение $f$
    с $q < \lceil \log_2 d \rceil$ дополнительными входами, где $d = \max\limits_{\vv y}{|A_{\vv y}|}$.
\end{assertion}
\begin{proof}
    Докажем от противного.
    Пусть существует обратимая схема $\frS$, состоящая из элементов множества $\Omega_n^2$
    и реализующая отображение $f$ с $q < \lceil \log_2 d \rceil$ дополнительными входами.
    Поскольку $q$ является целым числом, то в этом случае $q < \log_2 d$. Следовательно, $d > 2^q$.

    Существует множество $A_{\vv y}$, мощность которого равна $d$: $|A_{\vv y}| = d$, $\vv y \in \ZZ_2^n$.
    Определим множество $A \subseteq \ZZ_2^{n+q}$ следующим образом:
    $$
        A = \{\,\vv x = \langle x_1, \ldots, x_n, 0, \ldots, 0 \rangle \in \ZZ_2^{n+q} \mid
            \langle x_1, \ldots, x_n \rangle \in A_{\vv y}\,\} \; .
    $$

    Рассмотрим булево преобразование $g\colon \ZZ_2^{n+q} \to \ZZ_2^{n+q}$,
    задаваемое схемой $\frS$.
    Для всех $\vv x \in A$ верно следующее равенство:
    \begin{align*}
        &g(\vv x) = \langle y_1, \ldots, y_n, z_1, \ldots, z_q \rangle \; , \\
        &\vv y = \langle y_1, \ldots, y_n \rangle, z_i \in \ZZ_2 \; .
    \end{align*}
    Отсюда следует, что мощность множества значений преобразования $g$ на множестве $A$
    $$
        |g[A]| \leqslant 2^q \; .
    $$
    При этом $|A| = |A_{\vv y}| = d > 2^q \Rightarrow$ на множестве $A$ преобразование $g$ сюръективно
    $\Rightarrow g$ не биективно, а значит схема $\frS$ не существует.

    Пришли к противоречию, следовательно, доказываемое утверждение верно.
\end{proof}

Для упрощения дальнейших рассуждений воспользуемся следующими отображениями, введенными в работе~\cite{my_complexity_bounds_dm}:
\begin{enumerate}

    \item
        \textit{Расширяющее} отображение $\phi_{n,n+k}\colon \mathbb Z_2^n \to \mathbb Z_2^{n+k}$ вида
        $$
            \phi_{n,n+k}( \langle x_1, \ldots, x_n \rangle ) = \langle x_1, \ldots, x_n, 0, \ldots, 0 \rangle  \; .
        $$

    \item
        \textit{Редуцирующее} отображение $\psi_{n+k,n}\colon \mathbb Z_2^{n+k} \to \mathbb Z_2^n$ вида
        $$
            \psi_{n+k,n}( \langle x_1, \ldots, x_{n+k} \rangle ) =
            \langle x_1, \ldots, x_n \rangle  \; .
        $$

\end{enumerate}

Рассмотрим множество $F(n,q)$ при $0 < q \leqslant n$. Из утверждения~\ref{assertion_max_additional_outputs} следует,
что $F(n,q)$ не включает в себя булевы отображения, для которых максимальное количество прообразов среди всех выходов
строго больше $2^q$. Покажем, что все остальные булевы отображения принадлежат $F(n,q)$.
Для этого нужно доказать, что для любого отображения $f\colon \ZZ_2^n \to \ZZ_2^n$ с максимальным количеством прообразов
среди всех выходов не более $2^q$ можно построить четную подстановку $h$,
задающую преобразование $f_h\colon \ZZ_2^{n+q} \to \ZZ_2^{n+q}$, такую, что для всех $\vv x \in \ZZ_2^n$ верно равенство
\begin{equation}
    \psi_{n+q,n}(f_h(\phi_{n,n+q}(\vv x))) = f(\vv x) \; .
        \label{formula_transformation_extension}
\end{equation}
Построение подстановки $h$ для заданного отображения $f$ будем называть дополнением отображения $f$ до четной подстановки.

Обозначим через $X \subset \ZZ_2^{n+q}$ подмножество всех векторов, у которых старшые $q$ координат равны нулю:
$X = \{\, \phi_{n,n+q}(\vv x) \mid \vv x \in \ZZ_2^n \,\}$.
Пусть для некоторого выхода $\vv y$ отображения $f$ количество прообразов больше одного: $|A_{\vv y}| > 1$.
Для всех $\vv x \in A_{\vv y}$ искомая подстановка $h$ должна давать различные значения:
$$
    f_h(\phi_{n,n+q}(\vv x_1) \neq f_h(\phi_{n,n+q}(\vv x_2),
        \;\text{где}\;\, \vv x_1, \vv x_2 \in A_{\vv y}, \; \vv x_1 \neq \vv x_2 \; .
$$
При этом мы всегда можем дополнить отображение $f$ таким образом,
что хотя бы для одного входа $\vv x \in A_{\vv y}$ значение выхода преобразования $f_h$ принадлежало множеству $X$:
$$
    \exists \vv x \in A_{\vv y} \colon f_h(\phi_{n,n+q}(\vv x)) \in X \; .
$$
Для всех остальных входов из $A_{\vv y}$ значение выходов преобразования $f_h$ не принадлежит множеству $X$.
Это следует из равенства~\eqref{formula_transformation_extension}.

Обозначим через $M_h=\{\,\vv x \in \ZZ_2^{n+q} \mid f_h(\vv x) \neq \vv x \,\}$
множество подвижных точек искомой подстановки $h$, а через $Y=\{\,\vv y = f_h(\vv x) \mid \vv x \in X \,\}$~--- множество
выходов преобразования $f_h$, первые $n$ координат которых определены заданным отображением $f$.

Выясним, чему равно максимальное значение $|M_h|$.
В худшем случае подвижны все входы $\vv x \in X$, следовательно, $|X| \leqslant 2^n$.
Очевидно, что $|Y| \leqslant |X|$, поэтому $|Y| \leqslant 2^n$.
Как было сказано выше, мы всегда можем дополнить отображение $f$ таким образом, чтобы выполнялось неравенство
$X \cap Y \neq \varnothing$. Таким образом,
\begin{equation}
    |X \cup Y| \leqslant 2^{n+1} - 1 \; .
        \label{formula_x_cup_y}
\end{equation}

Остается выяснить, сколько еще требуется подвижных точек, чтобы $f_h$ задавало четную подстановку.
Для этого необходимо определить значения $f_h(\vv x)$ для всех $\vv x \in Y \setminus X$.
Если таких значений нет (случай $Y = X$), то отображение $f$ задает подстановку на множестве $\ZZ_2^n$. В случае,
когда эта подстановка является нечетной, для ее дополнения до четной требуется ровно 2 вектора (дополнительная транспозиция)
из множества $\ZZ_2^{n+q} \setminus X$, что дает оценку $|M_h| \leqslant 2^n + 2$.

Обозначим через $f_h^{(k)}(\vv x)$, $k \in \NN_+$, следующую величину:
$$
    f_h^{(k)}(\vv x) = \overbrace{f_h(f_h(\ldots f_h}^k(\vv x) \ldots)) \; .
$$
Пусть $Y' = Y \setminus X \neq \varnothing$. Для каждого $\vv y \in Y'$ существует такой вектор $\vv x_y \in X$
и такое число $k_y \in \NN_+$, что
\begin{gather*}
    f_h^{(k_y)}(\vv x) = \vv y \; , \\
    \not \exists \vv x \in X \colon f_h(\vv x) = \vv x_y \; .
\end{gather*}
Будем называть $\vv x_y$ началом цепочки длины $k_y$ для $\vv y \in Y'$.
Таким образом, у нас определены $|Y'|$ цепочек различных длин с различными началами, являющимися векторами из $X$.
Оставшиеся вектора из $X$ входят в некоторые замкнутые цепочки (циклы),
для каждого элемента $x$ которых верно равенство $f_h^{(k_x)}(\vv x) = \vv x$, где $k_x$~--- длина соответствующей цепочки
(количество элементов в цикле).

Доопределим отображение $f$ следующим образом: пусть $f_h(\vv y) = \vv x_y$ для всех $\vv y \in Y'$,
где $\vv x_y$~---начало цепочки. Тогда отображение $f_h$ будет задавать подстановку $h'$, которая, возможно, будет нечетной.
Для того, чтобы в этом случае доопределить отображение $f$ до четной подстановки, поступим следующим образом:
\begin{itemize}
    \item если $|Y'| = 1$, $\vv y \in Y'$ (случай одной цепочки), то выберем некоторый вектор
        $\vv z \in \ZZ_2^{n+q} \setminus Y$ и положим $f_h(\vv y) = \vv z$, $f_h(\vv z) = \vv x_y$,
        где $\vv x_y$~---начало цепочки;
    \item если $|Y'| > 1$, то для некоторых $\vv y_1, \vv y_2 \in Y'$ и только для них положим
        $f_h(\vv y_1) = \vv x_{y_2}$, $f_h(\vv y_2) = \vv x_{y_1}$,
        где $\vv x_{y_1}$ и $\vv x_{y_2}$~---начало соответствующих цепочек.
\end{itemize}
Для всех остальных векторов $\vv z' \in \ZZ_2^{n+q} \setminus (X \cup Y)$,
не равных в случае одной цепочки вектору $\vv z$ (см. выше), положим $f_h(\vv z') = \vv z'$.

В итоге мы получим полностью определенное на множестве $\ZZ_2^{n+q}$ преобразование $f_h$,
задающее искомую четную подстановку $h$. При этом количество подвижных точек $M_h \leqslant 2^{n+1}$.
Это следует из неравенства~\eqref{formula_x_cup_y} и того факта, что только в случае одной цепочки мы задействуем один
дополнительный вектор (добавляем одну подвижную точку).
Следовательно, для любого значения $q > 0$ и любого отображения $f\colon \ZZ_2^n \to \ZZ_2^n$ с максимальным количеством
прообразов среди всех выходов не более $2^q$ можно построить обратимую схему,
состоящую из элементов множества $\Omega_n^2$ и реализующую отображение $f$. Таким образом, все такие отображения $f$
принадлежат множеству $F(n,q)$.

\section{Сложность обратимых схем с малым числом дополнительных входов}

Определим поведение функции $L(n,q)$, когда значение количества $q$ дополнительных входов в обратимой схеме
удовлетворяет неравенству $q \leqslant O(n^2)$.

\begin{theorem}\label{theorem_upper_bound_small_number_q}
    Для любого значения $q$ верно соотношение
    $$
        L(n,q) \lesssim \frac{192n2^n}{\log_2 n} \; .
    $$
\end{theorem}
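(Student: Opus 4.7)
\medskip

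\noindent\textbf{Plan of proof.}
I would split on the size of $q$. For $q$ above roughly $n^2$ the theorem follows immediately from the bound $L(n,q) \lesssim n 2^n / \log_2 q$ proved in~\cite{my_general_bounds}: in that range $\log_2 q \geqslant 2\log_2 n$, and the implicit constant from the previous work is absorbed into the $192$. So the real content lies in the range $0\leqslant q\leqslant O(n^2)$, with $q=0$ being the worst case.

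In this range the first step is the reduction established in Section~2: every $f\in F(n,q)$ extends to an even permutation $h$ on $\ZZ_2^{n+q}$ with at most $2^{n+1}$ moved points and satisfies~\eqref{formula_transformation_extension}. Hence it is enough to bound the $\Omega_{n+q}^2$-complexity of an arbitrary even permutation on $n+q$ coordinates and then to compose the resulting circuit with $\phi_{n,n+q}$ and $\psi_{n+q,n}$. Since $q\leqslant O(n^2)$ gives $(n+q)2^{n+q}/\log_2(n+q) = \Theta(n2^n/\log_2 n)$, this reduction preserves the target bound.

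The main construction is a Lupanov-style block synthesis of $h$. I would partition the $n$ coordinates into groups of size $k = \lceil \log_2 n\rceil - c$ for a suitable constant $c$ chosen to optimise the total count. Treating the resulting $2^{n-k}$ cosets as index blocks, I would realise $h$ stage by stage: for each index block, compute a table of its image (cost $O(2^k\cdot n)$ gates), apply the correcting permutation by conditional inversions, and then uncompute the table. Summing over the $O(n/k)$ levels yields $O(n2^n/\log_2 n)$ gates, and the explicit constant $192$ is absorbed into the audit of the compute--use--uncompute triple, the cost of simulating 2-CNOT gates on borrowed wires, and the rounding of $k$.

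The main obstacle is the scratch budget: each block-lookup needs $\Theta(n)$ auxiliary lines, whereas for small $q$ only $q$ dedicated additional wires are available. I would overcome this by borrowing scratch from the $n$ main lines at moments when they are temporarily inactive: once the coordinates entering a given block have been read into the working area, those very lines can serve as scratch until the uncomputation stage restores them. Careful bookkeeping shows that at every instant the combination of the $q$ dedicated wires and the borrowed main lines supplies the $O(n)$ working wires needed by a single block, so the construction goes through uniformly in $q\geqslant 0$ and the bound $192n2^n/\log_2 n$ is achieved.
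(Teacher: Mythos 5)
Your overall skeleton — reduce to the extension of $f$ to an even permutation with at most $2^{n+1}$ moved points from Section~2, then invoke an upper bound for synthesizing permutations — is the same as the paper's, but two steps in your accounting are genuinely broken. First, the claim that $q\leqslant O(n^2)$ gives $(n+q)2^{n+q}/\log_2(n+q)=\Theta(n2^n/\log_2 n)$ is false: already for $q=n$ the left-hand side is $\Theta(n2^{2n}/\log_2 n)$, larger than the target by a factor of $2^n$. The reduction to a permutation on $n+q$ wires does \emph{not} preserve the bound if you charge for a generic permutation of $\ZZ_2^{n+q}$. The paper closes this gap in two moves that you omit: (a) since $F(n,q)=P_2(n,n)$ for $q\geqslant n$, one may assume $q\leqslant n$, so the permutation lives on at most $2n$ wires; (b) the upper bound of~\cite{my_complexity_bounds_dm} (Теорема~2) scales with the number of \emph{moved points} of the permutation, not with $2^{\text{number of wires}}$, and the extension from Section~2 has only $2^{n+1}\leqslant 2\cdot 2^n$ moved points. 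That is exactly where the constant $192 = 48\cdot 2\cdot 2$ comes from; in your version the $2^{n+1}$ bound is mentioned in the reduction but never enters the complexity count.

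Second, your "main construction" re-derives from scratch the hard result that the paper simply cites, namely $L(m,0)\lesssim 48m2^m/\log_2 m$, and the sketch does not hold up. The arithmetic as written ($2^{n-k}$ blocks, each costing $O(2^k\cdot n)$) sums to $O(n2^n)$ with no $1/\log_2 n$ savings — the Lupanov-type gain requires sharing work across blocks, which you do not describe. More seriously, the "borrow scratch from the main lines" step is not justified in this model: lines cannot be overwritten, only conditionally inverted, and when $q=0$ there is no zeroed working area to move their values into, so it is not clear what "once the coordinates have been read into the working area, those very lines can serve as scratch" means. This zero-ancilla synthesis is precisely the technical core of~\cite{my_complexity_bounds_dm}, and "careful bookkeeping shows" is not a substitute for it. The intended proof needs no new construction at all: combine $L(n,q)\leqslant L(2n,0)$ with the moved-points refinement of the cited bound.
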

\begin{proof}
    Любое булево отображение $f\colon \ZZ_2^n \to \ZZ_2^n$ можно реализовать обратимой схемой,
    состоящей из элементов множества $\Omega_n^2$ и имеющей не более $n$ дополнительных входов,
    поскольку $F(n,q) = P_2(n,n)$ при $q \geqslant n$.
    Следовательно, верно неравенство $L(n,q) \leqslant L(n,n)$ при $q \geqslant n$.

    В работе~\cite[Теорема 2]{my_complexity_bounds_dm} была доказана оценка сложности обратимых схем без дополнительной памяти
    \begin{equation}
        L(n,0) \lesssim \frac{48n2^n}{\log_2 n} \; .
        \label{formula_upper_bound_no_memory}
    \end{equation}
    Поскольку $L(n,q) \leqslant L(n+q, 0)$ и оценка~\eqref{formula_upper_bound_no_memory} возрастает с ростом $n$,
    то $L(n,q) \leqslant L(2n,0)$ для любого значения $q$ такого, что $q \leqslant n$,
    и $L(n,q) \leqslant L(n,n) \leqslant L(2n,0)$ при $q \geqslant n$.
    Следовательно, для всех значений $q$ верно соотношение
    \begin{equation}
        L(n,q) \leqslant L(2n,0) \lesssim \frac{96n2^{2n}}{\log_2 n} \; .
        \label{formula_upper_bound_no_memory_2n}
    \end{equation}

    В оценке~\eqref{formula_upper_bound_no_memory_2n} предполагается,
    что в худшем случае количество подвижных точек реализуемой подстановки равно $2^{2n}$
    (см. работу~\cite{my_complexity_bounds_dm}).
    Однако в предыдущем разделе мы доказали, что любое отображение $f \in F(n,q)$ можно дополнить до четной подстановки,
    имеющей не более $2^{n+1}$ подвижных точек. Отсюда следует, что для любого значения $q$ верно соотношение
    $$
        L(n,q) \lesssim \frac{192n2^n}{\log_2 n} \; .
    $$
\end{proof}

Теперь можно определить порядок роста функции $L(n,q)$ при малых значениях $q$.
\begin{theorem}\label{theorem_order_bound}
    Для любого значения $q$ такого, что $q \leqslant O(n^2)$, верно соотношение
    $$
        L(n,q) \asymp \frac{n2^n}{\log_2 n} \; .
    $$
\end{theorem}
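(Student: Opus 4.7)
Верхняя оценка $L(n,q) \lesssim n2^n / \log_2 n$ уже установлена теоремой~\ref{theorem_upper_bound_small_number_q}. Поэтому план состоит в том, чтобы доказать соответствующую нижнюю оценку $L(n,q) \gtrsim n2^n / \log_2 n$ для всех $q \leqslant O(n^2)$, а затем объединить её с уже имеющейся верхней оценкой.

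Для нижней оценки я применю стандартный мощностной (шенноновский) аргумент. Во-первых, оценю сверху число различных обратимых схем из не более чем $s$ элементов множества $\Omega_{n+q}^2$: каждый элемент задаётся выбором не более трёх различных индексов из $n+q$, так что общее число таких элементов не превосходит $c(n+q)^3$ для некоторой константы $c > 0$, а число различных схем (и, следовательно, различных реализуемых ими булевых преобразований) не превосходит $[c(n+q)^3]^s$.

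Во-вторых, оценю снизу мощность множества $F(n,q)$. Из результата предыдущего раздела следует, что $F(n,q)$ содержит всякую (чётную при $n \geqslant 4$) подстановку на $\ZZ_2^n$, поскольку у любой подстановки максимальное число прообразов равно $1 \leqslant 2^q$. По формуле Стирлинга получаем $\log_2 |F(n,q)| \geqslant \log_2 ((2^n)!/2) = n2^n (1 - o(1))$. Сопоставляя обе оценки, приходим к неравенству $s \cdot 3 \log_2(c(n+q)) \geqslant n2^n (1 - o(1))$, откуда $L(n,q) \gtrsim n2^n / (3 \log_2(n+q))$. При условии $q \leqslant O(n^2)$ верно $\log_2(n+q) \leqslant 2 \log_2 n + O(1)$, так что $L(n,q) \gtrsim n2^n / \log_2 n$, что в сочетании с теоремой~\ref{theorem_upper_bound_small_number_q} даёт искомое соотношение.

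Основным препятствием здесь является именно аккуратное применение мощностного аргумента при малых $q$: множество $F(n,q)$ существенно меньше $P_2(n,n)$, и необходимо убедиться, что оно всё же достаточно велико для получения нужной нижней оценки. Ключевой факт состоит в том, что все (чётные) подстановки на $\ZZ_2^n$ лежат в $F(n,q)$ уже при $q = 0$, а их количество $(2^n)!/2$ по Стирлингу даёт логарифм порядка $n2^n$, чего хватает. При этом тот же аргумент, применённый без ограничения $q \leqslant O(n^2)$, автоматически даёт общее соотношение $L(n,q) \gtrsim n2^n / \log_2(n+q)$, соответствующее заявленному во введении.
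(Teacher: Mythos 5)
Ваше доказательство верно и по существу совпадает с доказательством в статье: там нижняя оценка не выводится заново мощностным аргументом, а просто цитируется из предыдущей работы (теорема 1 из~\cite{my_complexity_bounds_dm}, $L(n,q) \geqslant \frac{2^n(n-2)}{3\log_2(n+q)} - \frac{n}{3}$), которая сама получена тем же подсчётом числа схем, после чего, как и у вас, при $q \leqslant O(n^2)$ используется $\log_2(n+q) \lesssim 2\log_2 n$ и результат сопоставляется с верхней оценкой теоремы~\ref{theorem_upper_bound_small_number_q}. Ваша самостоятельная выкладка (число элементов $O((n+q)^3)$, нижняя оценка $|F(n,q)|$ через $(2^n)!/2$ по Стирлингу) корректна и лишь делает цитируемый шаг замкнутым.
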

\begin{proof}
    В работе~\cite[Теорема 1]{my_complexity_bounds_dm} была доказана общая нижняя оценка сложности обратимых схем
    $$
        L(n,q) \geqslant \frac{2^n(n-2)}{3\log_2(n+q)} - \frac{n}{3} \; .
    $$
    Для любого значения $q$ такого, что $q \leqslant O(n^2)$,
    верно соотношение $L(n,q) \gtrsim n2^n \mathop / (6 \log_2 n )$.
    Сопоставляя данную оценку и оценку из Теоремы~\ref{theorem_upper_bound_small_number_q},
    получаем порядок роста функции $L(n,q)$ из условия доказываемой теоремы.
\end{proof}

\begin{corollary}
    Для любого значения $q$ такого, что $q \lesssim 2^{n-\lceil n \mathop / \phi(n)\rceil + 1}$,
    где $\phi(n)$ и $\psi(n)$~--- любые сколь угодно медленно растущие функции такие, что
    $\phi(n) \leqslant n \mathop / (\log_2 n + \log_2 \psi(n))$,
    верно соотношение
    $$
        L(n,q) \asymp \frac{n2^n}{\log_2 (n + q)} \; .
    $$
\end{corollary}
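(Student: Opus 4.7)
The plan is to reduce the corollary to two known regimes by splitting the allowed range of $q$ around the threshold $q \asymp n^2$, and to match the logarithms $\log_2 n$ and $\log_2 q$ appearing there to the uniform expression $\log_2(n+q)$.

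First I would observe the elementary dichotomy. If $q \leqslant n^2$, then $n+q \leqslant n^2+n$, so $\log_2(n+q) \leqslant 2\log_2 n + O(1)$, while trivially $\log_2(n+q) \geqslant \log_2 n$; hence $\log_2(n+q) \asymp \log_2 n$. If on the other hand $q \geqslant n^2$ (in particular $q \geqslant n$), then $q \leqslant n+q \leqslant 2q$, so $\log_2(n+q) \asymp \log_2 q$.

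Next I split the proof into two cases. In the low range $q \leqslant n^2$ the claim reduces directly to Theorem~\ref{theorem_order_bound}: that theorem gives $L(n,q) \asymp n2^n / \log_2 n$, and by the dichotomy above the right‑hand side equals $\Theta(n 2^n / \log_2(n+q))$. In the high range $n^2 < q \lesssim 2^{n-\lceil n/\phi(n)\rceil + 1}$ I would invoke the result from~\cite{my_general_bounds} quoted in the introduction, which states $L(n,q) \asymp n 2^n / \log_2 q$ for $n^2 \lesssim q \lesssim n\,2^{n-o(n)}$; combining this with $\log_2(n+q) \asymp \log_2 q$ finishes this case.

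The only genuine thing to check is that the upper bound $q \lesssim 2^{n-\lceil n/\phi(n)\rceil + 1}$ actually fits inside the range $q \lesssim n\,2^{n-o(n)}$ required by the result from~\cite{my_general_bounds}. Since $\phi(n)$ is an arbitrarily slowly growing function with $\phi(n) \to \infty$, and the assumption $\phi(n) \leqslant n/(\log_2 n + \log_2 \psi(n))$ guarantees $n/\phi(n) \geqslant \log_2 n + \log_2 \psi(n) \to \infty$, it follows that $\lceil n/\phi(n)\rceil = o(n)$; hence $n - \lceil n/\phi(n)\rceil + 1 = n - o(n)$ and $2^{n-\lceil n/\phi(n)\rceil + 1} \leqslant n\,2^{n-o(n)}$, so the cited asymptotic result applies. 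This compatibility check is the only non‑routine step, and it is immediate once one uses that both $\phi$ and $\psi$ tend to infinity; the two cases together cover all admissible values of $q$ and yield the announced asymptotic $L(n,q) \asymp n2^n/\log_2(n+q)$.
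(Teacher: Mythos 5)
Your proposal is correct and follows essentially the same route as the paper: combine the previously established asymptotic $L(n,q) \asymp n2^n \mathop/ \log_2 q$ for the large-$q$ regime with Theorem~\ref{theorem_order_bound} for $q \leqslant O(n^2)$, and match the logarithms to the uniform expression $\log_2(n+q)$. The only cosmetic difference is that you invoke the version of the large-$q$ result from~\cite{my_general_bounds} with range $n^2 \lesssim q \lesssim n2^{n-o(n)}$ (hence your extra compatibility check), whereas the paper cites the variant from~\cite{my_complexity_bounds_dm} whose stated range already coincides with the corollary's; your explicit verification that $\log_2(n+q) \asymp \log_2 n$ for $q \leqslant n^2$ and $\log_2(n+q) \asymp \log_2 q$ for $q \geqslant n^2$ is left implicit in the paper.
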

\begin{proof}
    В работе~\cite[Утверждение 1]{my_complexity_bounds_dm} была доказана оценка
    $$
        L(n,q) \asymp \frac{n2^n}{\log_2 q}
    $$
    для всех значений $q$ таких, что $n^2 \lesssim q \lesssim 2^{n-\lceil n \mathop / \phi(n)\rceil + 1}$,
    где $\phi(n)$ и $\psi(n)$~--- любые сколь угодно медленно растущие функции такие, что
    $\phi(n) \leqslant n \mathop / (\log_2 n + \log_2 \psi(n))$.
    Сопоставляя данную оценку и оценку Теоремы~\ref{theorem_order_bound}, получаем порядок роста функции $L(n,q)$ 
    из условия доказываемого утверждения.
\end{proof}

%%%%%%%%%%%%%%%%%%%%%%%%%%%%%%%%%%%%%%%%%%%%%%%%%%%%%%%%%%%%%%%%%%
%% Заключение
%%%%%%%%%%%%%%%%%%%%%%%%%%%%%%%%%%%%%%%%%%%%%%%%%%%%%%%%%%%%%%%%%%

\section*{Заключение}

В данной работе были рассмотрены обратимые схемы, состоящие из функциональных элементов NOT, CNOT и 2-CNOT
и имеющие малое число дополнительных входов $q$.

Была изучена функция Шеннона сложности $L(n, q)$ обратимой схемы с малым числом дополнительных входов,
реализующей какое-либо отображение $\ZZ_2^n \to \ZZ_2^n$. Был установлен порядок роста функции $L(n, q)$
при $q \leqslant O(n^2)$ и $q \lesssim 2^{n-o(n)}$.

В данной работе было доказано, что $L(n,q) \asymp n2^n \mathop / \log_2 (n + q)$ для любого значения $q$ такого,
что $q \lesssim 2^{n-o(n)}$.

%%%%%%%%%%%%%%%%%%%%%%%%%%%%%%%%%%%%%%%%%%%%%%%%%%%%%%%%%%%%%%%%%%
%% Литература
%%%%%%%%%%%%%%%%%%%%%%%%%%%%%%%%%%%%%%%%%%%%%%%%%%%%%%%%%%%%%%%%%%

\end{document}